\title{Local Correction of Juntas}
\author{
Noga Alon
\thanks{
Sackler School of Mathematics and Blavatnik School of Computer Science,
Tel Aviv University, Tel Aviv 69978, Israel and Institute for Advanced
Study, Princeton, New Jersey 08540, USA.  Email: nogaa@tau.ac.il.
Research supported in part by an ERC Advanced grant, by a USA-Israeli
BSF grant and by NSF grant No.  DMS-0835373.
}
\and
Amit Weinstein
\thanks{
Blavatnik School of Computer Science, Tel Aviv University, Tel Aviv 69978,
Israel. Email: amitw@tau.ac.il. Research supported in part by an 
ERC advanced grant.
}
}
\newtheorem{theo}{Theorem}
\newtheorem{prop}[theo]{Proposition}
\newtheorem{coro}[theo]{Corollary}
\newtheorem*{question*}{Question}
\newtheorem{definition}{Definition}
\newtheorem*{definition*}{Definition}
\newcommand{\Exp}[1]{\mathbf{E}\left[#1\right]}
\newcommand{\ZZ}{\mathbb{Z}}
\newcommand{\DDD}{\mathcal{D}}
\newcommand{\Inf}{\mathrm{Inf}}
\newcommand{\Maj}{\mathrm{Maj}}
\def \eps {\varepsilon}
\begin{document}
\maketitle

\begin{abstract}
A Boolean function $f$ over $n$ variables is said to be $q$-locally
correctable if, given a black-box access to a function $g$ which is
"close" to an isomorphism $f_{\sigma}$ of $f$, we can compute
$f_{\sigma}(x)$ for \emph{any} $x \in \ZZ_2^n$ with good probability
using $q$ queries to $g$.

We observe that any $k$-junta, that is, any 
function which depends only on
$k$ of its input variables, is $O(2^k)$-locally correctable.
Moreover, we show that
there are examples where this is essentially best possible, and
locally correcting some $k$-juntas requires a number of queries which is
exponential in $k$. These examples, however,
are far from being typical, and indeed we prove that for
almost every $k$-junta, $O(k \log k)$ queries suffice.
\end{abstract}

\section{Introduction}
The field of property testing of Boolean functions received a considerable
amount of attention in the last two decades. Many properties of
functions have been examined in order to estimate what is the needed
query complexity for testing them, that is, the number of inputs of
the function one has to read in order to distinguish between a function
that satisfies the property and one that is "far" from satisfying it. In
particular, one is usually interested in properties for which the number
of queries is independent of the input size. Some of these properties
are linearity \cite{BLR}, being a dictator function \cite{BGS,PRS},
a junta \cite{CG,FKRSS}, or a low-degree polynomial \cite{AKKLR}.

Another property that one might consider testing is functions isomorphism,
i.e.  testing if two functions are identical up to relabeling of the input
variables.  A common scenario is where one function is given in advance
and the goal of the tester is to determine if the second input function
is isomorphic to it or far from any isomorphism of it. Several recent
results indicate that testing this property is hard for most functions
(requires $\Omega(n)$ queries), and specifically for $k$-juntas there
are lower bounds which depend on $k$ (see e.g. \cite{BO,AB,CGM}).

The focus of our work is not testing such properties, but rather locally
correcting functions, that is, determining the value of a function in a
given point by reading its values in several other points.
This is closely related to random self-reducibility, as pointed out already in \cite{BLR}.
More precisely, we care about locally correcting specific functions which are known up
to isomorphism.

\begin{question*} Given a specific Boolean function $f$, what is the needed
query complexity in order to correct an input function which is
\emph{close} to some isomorphism $f_{\sigma}$ of $f$?
\end{question*}

This question can be seen as a special case of locally correctable 
codes (see,
e.g., \cite{Y}). Namely, each codeword would be the $2^n$ evaluations of an
isomorphism of the input function (at most $n!$ distinct codewords) and we
would like to correct 
any specific value of the given noisy codeword using as
few queries as possible.

Here we study the above question mostly for juntas. We provide both
lower and upper exponential bounds for the query complexity of locally
correcting juntas with 
respect to their size. However, the given lower bound is
applicable only to a small portion of the juntas and in fact 
we show that most
$k$-juntas are locally correctable using a nearly linear (in $k$) number of
queries.

\subsection{Preliminaries}

In order to correct functions, we need to first define when two
functions are "close", as otherwise correction is hopeless. We use
the common definition, saying two Boolean functions are $\eps$-close
if they agree on all but at most an $\eps$ fraction of the inputs. The
following definition best describes the focus of our work,
indicating when a function is locally correctable.

\begin{definition*}
A Boolean function $f: \ZZ_2^n \mapsto \ZZ_2$ is said to be
$q$-locally correctable for $\eps > 0$ if the following holds. There
exists an algorithm that given an input function $g$ which is
$\eps$-close to an isomorphism $f_{\sigma}$ of $f$, can determine
the value $f_{\sigma}(x)$ for \emph{any} specific $x \in \ZZ_2^n$
with probability at least $2/3$, using $q$ queries to $g$.
\end{definition*}

More generally, we also define a family of functions to be locally correctable
when we do not require to know which specific function from the family we are trying to correct.

\begin{definition*}
A family $\mathcal{F}$ of Boolean functions is said to be $q$-locally correctable for
$\eps > 0$ if the following holds.
There exists an algorithm that given an input function $g$ which is
$\eps$-close to an isomorphism $f_{\sigma}$ of $f$,
for some $f \in \mathcal{F}$ from the family, can determine
the value $f_{\sigma}(x)$ for \emph{any} specific $x \in \ZZ_2^n$
with probability at least $2/3$, using $q$ queries to $g$.
\end{definition*}

A crucial observation when looking at the above definitions is the fact
that the mentioned algorithm must hold for \emph{every} input $x \in
\ZZ_2^n$. Replacing this requirement by the ability to determine the
value at a uniform random $x$, any function would be trivially 1-locally
correctable for $\eps \leq 1/3$ as at least $2/3$ of the inputs remain
unmodified. In addition, it is useful to think of $\eps$ as a constant
independent of $n$, which however can depend on some property of the
function $f$. This dependence is often required to ensure that $g$
is close to a unique isomorphism of $f$ (up to equivalent isomorphisms).

A simple result regarding juntas is an exponential upper bound for the
number of queries, in terms of the junta's size. For this upper bound,
we use the analysis of testing low-degree polynomials and therefore get
the following more general bound.

\begin{prop}
Every polynomial of degree $k$ is $O(2^k)$-locally correctable for $\eps <
2^{-k-3}$.
\end{prop}

\begin{proof}[Proof sketch.]
The techniques used in testing low-degree polynomials rely on their
values on the points of random affine subcubes inside $\ZZ_2^n$ which
are defined by random bases of $k+1$ vectors and an offset in $\ZZ_2^n$
(see \cite{AKKLR}). Taking such a subcube and evaluating the sum of a
degree $k$ polynomial on all $2^{k+1}$ elements of it, always results in
zero. The test itself selects several such random subcubes and verifies
that this is indeed the case.  Since in our case, we are given some
specific input $x \in \ZZ_2^n$ for which we want to correct the function,
we can use a similar argument.

Given the input $x$, we randomly select $k+1$ vectors $x_1, x_2, \ldots,
x_{k+1}$ and consider the affine subcube whose basis is the set of 
these $k+1$
vectors and whose offset is $x$. Since the sum of evaluations inside this
affine subcube (which includes $x$) is zero, we can deduce the value at
$x$ by querying the other $2^{k+1}-1$ locations of the cube, assuming
none of them was modified. Since we select the $k+1$ vectors in the basis
randomly, relying on the (easy case in the) analysis of \cite{AKKLR}
which is based on the fact that each input queried is uniformly random,
we can bound this probability by $(2^{k+1}-1)\eps < 1/4$ and therefore
this algorithm indicates that $f$ is indeed $O(2^k)$-locally correctable
for $\eps < 2^{-k-3}$.
\end{proof}

\begin{coro}
The family of $k$-juntas is $O(2^k)$-locally correctable for $\eps < 2^{-k-3}$.
\end{coro}

\begin{proof} A $k$-junta is in particular a polynomial of degree $k$ and
therefore is also $O(2^k)$-locally correctable using the above proposition.
In addition, the algorithm suggested by the proposition does not require any
knowledge about the input function except for it being a polynomial
of degree $k$, thus
the family of $k$-juntas is $O(2^k)$-locally correctable.
\end{proof}

\subsection{Our results}

A natural question is whether the exponential upper bound for low-degree
polynomials, which is applicable also for juntas, is indeed tight
in the case of juntas.  We show that the answer is indeed positive,
but only for a small fraction of the juntas. In other words, for some
juntas the exponential upper bound is also best possible but this is
far from being the typical case.

\begin{theo}\label{theo-junta-lower}
There exist some $k$-juntas which require $2^{\Omega(k)}$ (adaptive or
non-adaptive) queries in order to be 
locally corrected, even for $\eps$ which is
exponentially small in $n$.
\end{theo}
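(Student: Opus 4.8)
The plan is to prove the lower bound through an indistinguishability (adversary) argument formalized via Yao's principle. I will fix a single ``bad'' core function $h$ on $k$ coordinates, fix a worst-case query point $x$, and place a distribution $\DDD$ over the hidden isomorphism $\sigma$ (equivalently, over the set $T=\sigma(S)$ of relevant coordinates together with its labeling) and over an adversarial corruption of $g$, so that $f_{\sigma}(x)$ is an unbiased bit and yet every deterministic $q$-query corrector recovers it with probability below $2/3$ unless $q=2^{\Omega(k)}$. The entire source of hardness is that $\sigma$ is unknown; the corruption budget enters only as an auxiliary tool, and since I will corrupt only a Hamming ball around $x$, an exponentially small $\eps$ will suffice, matching the statement.

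First I would neutralize the ``cheap'' queries. If $g$ were reliable near $x$, the corrector could flip a single coordinate $i$ and read $g(x\oplus e_i)$: whenever $i\notin T$ this equals $f_{\sigma}(x)$, and a handful of such probes would moreover locate $T$. To forbid this I let the adversary corrupt $g$ on the entire Hamming ball of radius $r=cn$ about $x$, for a constant $c<1/2$. This ball is a $2^{-(1-H(c))n}$ fraction of $\ZZ_2^n$, where $H$ is the binary entropy function, and is therefore exponentially small in $n$; the values $g(x)$ and those of all its corrupted neighbours then carry no information. Consequently every informative query $y$ must differ from $x$ in $|D|>cn$ coordinates, where $D=\{i:y_i\neq x_i\}$, i.e.\ in a linear number of positions.

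Second I would choose $h$ so that such a spread-out query reveals $f_{\sigma}(x)=h(x|_{T})$ only when it happens to agree with $x$ on all of $T$. The projection of $y$ onto $T$ equals $x|_{T}$ exactly when $D\cap T=\emptyset$, in which case $g(y)=f_{\sigma}(y)=f_{\sigma}(x)$. Under $\DDD$, with $T$ spread uniformly, $\Prob{D\cap T=\emptyset}=\binom{n-|D|}{k}/\binom{n}{k}\approx(1-k/n)^{|D|}\le(1-k/n)^{cn}=e^{-\Omega(k)}$, since $|D|>cn$. Thus each query exposes the answer outright with probability only $2^{-\Omega(k)}$, and a union bound shows that some answer-revealing query occurs within $q$ steps with probability at most $q\cdot 2^{-\Omega(k)}$; once this is below a constant the corrector is reduced to guessing and errs with probability near $1/2$. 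This already yields $q=2^{\Omega(k)}$, \emph{provided} that queries which fail to avoid $T$ convey no usable information.

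That last proviso is the crux and the main obstacle. Because $h$ is a single fixed function, the corrector knows its whole truth table, so a query with $D\cap T\neq\emptyset$ still returns $h(y|_{T})$, which in principle constrains $T$ and, after enough queries, might let the corrector engineer a query avoiding $T$. The heart of the argument is therefore to design $h$ --- and the law of $T$ --- so that these responses are almost useless: I want $h$ to be highly resilient and unpredictable, in the sense that, for the posterior on $T$, the bit $h(y|_{T})$ stays essentially unbiased and nearly independent across queries, so that the transcript of \emph{any} adaptive $q$-query strategy has mutual information $o(1)$ with the target bit $f_{\sigma}(x)$ whenever $q=2^{o(k)}$. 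I would make this quantitative through a conditional per-query information bound, which automatically covers adaptive strategies and makes the bound identical for adaptive and non-adaptive queries (as the theorem asserts), and then conclude by a Fano-type inequality together with Yao's principle that the success probability is below $2/3$. Such an $h$ must indeed be special: the companion result that \emph{almost every} $k$-junta is correctable with only $O(k\log k)$ queries shows that a generic core function lets the corrector pin down $T$ quickly, so the whole difficulty lies in exhibiting a fixed core function whose evaluations leak almost nothing about its own relevant coordinates.
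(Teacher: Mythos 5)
Your framework --- Yao's principle, corrupting only an exponentially small neighborhood of $x$ so that the hidden permutation is the sole source of hardness, and a per-query bound of $2^{-\Omega(k)}$ on answer-revealing queries --- matches the paper's high-level strategy, and your first two steps are sound. But there is a genuine gap exactly where you say the heart of the argument lies: you never exhibit the core function $h$, and the property you propose to require of it (that, over the posterior on $T$, the responses $h(y|_{T})$ be essentially unbiased \emph{and} nearly independent across queries) is unattainable for \emph{every} $h$. Consider the strategy that queries $y$ and $y'=y\oplus e_i$: whenever $i\notin T$, an event of prior probability $1-k/n$, the two responses are equal with certainty, so they cannot be close to independent unbiased bits; the only way responses can be nearly independent is if they are nearly deterministic over the posterior, i.e.\ nearly constant --- the opposite of unbiased. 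Moreover, ``highly resilient'' points in the wrong direction: the most resilient $k$-junta, the parity of $k$ variables, has perfectly unbiased responses over a random $T$, yet it is a degree-$1$ polynomial and hence correctable with $O(1)$ queries by the self-correction identity $f_{\sigma}(x)=f_{\sigma}(x+y)+f_{\sigma}(y)$. Unpredictability of individual responses does not prevent correction; what prevents correction is that the responses carry no information about where $T$ lies.

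The paper resolves this crux with the opposite property: sparsity. It takes $h=\mathrm{AND}$ of $k$ variables, $x$ the balanced point ($n/2$ zeros followed by $n/2$ ones), and two distributions: $\DDD_0$ places $T$ uniformly inside the first half (so $f_{\sigma}(x)=0$) and $\DDD_1$ inside the second half (so $f_{\sigma}(x)=1$); the corruption sets $g(y)=0$ whenever $y$ has Hamming weight above $0.3n$ in either half, an exponentially small fraction of inputs which in particular wipes out $x$ and its neighborhood. Then any permitted query returns $1$ with probability at most $\binom{0.3n}{k}/\binom{n/2}{k}\le 0.6^{k}$ over $\sigma$, so under \emph{both} distributions the entire transcript is the all-zero string except with probability $q\cdot 2^{-\Omega(k)}$, and indistinguishability (adaptive or not) is immediate: the target bit is unbiased over the mixture while the responses are nearly constant, rather than unbiased. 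To salvage your outline you would replace ``resilient and unpredictable'' by ``exponentially sparse,'' after which the information bound you hoped to establish becomes trivial.
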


In the typical case, however, i.e., for almost every junta, the lower
bound above is far from being tight and in fact one can correct a typical
$k$-junta using a nearly linear number of queries (in $k$). Formally,
we prove the following.

\begin{theo}\label{theo-junta-upper} A $k$-junta 
in which every influencing variable has influence of at least $1/50$
is $O(k \log k)$-locally correctable 
for $\eps < 2^{-k-3}$. Therefore this is
the case for almost every $k$-junta.
\end{theo}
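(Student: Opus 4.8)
The plan is to reduce the correction of $f_\sigma$ at a point $x$ to two subtasks: first, identifying the set $J \subseteq [n]$ of the $k$ variables on which $f_\sigma$ genuinely depends, and second, self-correcting the value $f_\sigma(x) = h(x|_J)$ (where $h$ denotes the core function on the relevant coordinates) by sampling inside the subcube of inputs that agree with $x$ on $J$. The second subtask is the easy one, and it is where the bound $\eps < 2^{-k-3}$ enters. Once $J$ is known, consider the subcube $C_x = \{y \in \ZZ_2^n : y|_J = x|_J\}$, which has dimension $n-k$ and on which $f_\sigma$ is constantly equal to $f_\sigma(x)$. A uniformly random $y \in C_x$ is uniform over a set of size $2^{n-k}$, so the probability that $g(y) \neq f_\sigma(y)$ is at most $\eps \cdot 2^{n}/2^{n-k} = \eps\, 2^{k} < 1/8$. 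Hence a single query $g(y)$ already returns $f_\sigma(x)$ with probability exceeding $7/8$, and this subtask costs $O(1)$ queries per point $x$.

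The real work, and the source of the $O(k \log k)$ bound, is the identification of $J$, and this is exactly where the hypothesis that every influencing variable has influence at least $1/50$ is used. The idea is a block/independence test: randomly partition $[n]$ into $m$ blocks, and for each block $B$ estimate its block influence, i.e.\ the probability over a random base point $z$ and a random perturbation supported on $B$ that the value of $g$ changes. Since $g$ agrees with $f_\sigma$ on all but an $\eps$ fraction of points and all inputs queried in such a test are (near-)uniform, these estimates faithfully reflect the influence structure of $f_\sigma$: a block containing a relevant variable has block influence at least $1/50$, while a block containing none has block influence $0$, so a constant number of samples separates relevant from irrelevant blocks with good confidence.

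To turn block detection into the exact set $J$, I would use $O(\log k)$ independent random partitions into $O(k)$ blocks each; with this many partitions every pair of relevant variables is separated in some partition with high probability, so the $k$ relevant variables become isolated, and a relevant variable is then pinned down as a coordinate lying in a relevant block across the partitions. Combining a constant number of samples per block, $O(k)$ blocks per partition, and $O(\log k)$ partitions yields the claimed $O(k \log k)$ queries. The delicate point is the query accounting together with a union bound guaranteeing that all $k$ variables are located and that no irrelevant variable is mistakenly included, so that the fixed coordinate set has size exactly $k$ as the self-correction step demands. I expect this localization-and-accounting step, rather than the self-correction, to be the main obstacle.

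Finally, the ``almost every $k$-junta'' consequence follows by showing the influence hypothesis is generic. For a uniformly random core $h : \ZZ_2^k \to \ZZ_2$, the influence of each of its $k$ arguments is a normalized sum of $2^{k-1}$ independent fair coins and so concentrates tightly around $1/2$; a Chernoff bound together with a union bound over the $k$ variables shows that with probability $1 - o(1)$ every influence exceeds $1/50$. Thus the theorem applies to all but a vanishing fraction of $k$-juntas.
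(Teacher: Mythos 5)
Your final self-correction step and your genericity argument (Chernoff plus a union bound over the $k$ variables) are both correct and essentially identical to the paper's. The gap is at the heart of your plan: the exact identification of the relevant set $J$ is impossible within your query budget. There are $\binom{n}{k}$ candidates for $J$ and each query to $g$ reveals one bit, so any procedure that outputs $J$ correctly with probability $2/3$ needs $\Omega\left(\log \binom{n}{k}\right) = \Omega(k \log (n/k))$ queries; this exceeds $O(k\log k)$ as soon as, say, $n \geq 2^k$, so no bound depending only on $k$ can do what your intermediate step demands. Your concrete scheme fails accordingly: with $O(\log k)$ random partitions into $O(k)$ blocks, an \emph{irrelevant} coordinate lies in detected blocks of all partitions with probability $k^{-O(1)}$, so about $n\cdot k^{-O(1)} \gg k$ irrelevant coordinates survive the intersection. (There is also a secondary accounting problem: with only constantly many sample pairs per block, a given relevant block goes undetected with constant probability in each partition, so under your intersection rule every relevant variable is \emph{excluded} with probability $1 - k^{-O(1)}$; the paper spends $r = 100\log k + 500$ pairs per block precisely to survive the union bound over the $k$ relevant blocks.) Nor can you relax step 2 to use a superset $S \supseteq J$ of identified blocks: if you fix $y|_S = x|_S$ and sample uniformly outside $S$, then $y$ agrees with $x$ on roughly $2/3$ rather than $1/2$ of the irrelevant coordinates, so $y$ falls within Hamming distance about $0.4(n-k)$ of $x$ inside the subcube $\{y : y|_J = x|_J\}$ almost surely; that ball contains only about $2^{0.98(n-k)} < \eps 2^n$ points, so an adversary may corrupt all of them and your output is wrong with probability $1 - o(1)$. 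So you are right that exactness of the fixed coordinate set is the crux --- but it is unattainable.

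The paper escapes this dilemma by never identifying $J$ at all. It uses a single random partition into $3k$ blocks, finds the (at most $k$) blocks that appear influential, lets $S$ be their union, and then makes one correcting query $y$ with $y_i = x_i$ for $i \in S$ and $\Pr[y_i \neq x_i] = 3/4$ (not $1/2$) for $i \notin S$. The key observation is that, over the randomness of the partition itself, each irrelevant coordinate lies outside $S$ with probability exactly $2/3$, independently of the others, so its marginal flip probability is exactly $\tfrac{2}{3}\cdot\tfrac{3}{4} = \tfrac{1}{2}$. Hence $y$ is exactly uniform on the subcube $\{y : y|_J = x|_J\}$ even though the algorithm never learns $J$, and the faulty-evaluation probability is at most $\eps 2^{k}$, just as in your step 2. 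Replacing ``identify $J$ exactly, then sample uniformly in $C_x$'' by this block-plus-biased-flip trick is the missing idea that makes the $O(k\log k)$ bound achievable.
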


\begin{coro}\label{coro-junta-upper}
The family of $k$-juntas in which every influencing variable has influence of
at least $1/50$ is $O(k \log k)$-locally correctable for $\eps < 2^{-k-3}$.
\end{coro}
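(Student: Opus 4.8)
The plan is to argue exactly as in the proof of the earlier corollary that follows the $O(2^k)$ proposition: I would show that the corrector supplied by Theorem~\ref{theo-junta-upper} never uses any information about $f$ beyond the promise that it is a $k$-junta all of whose influencing variables have influence at least $1/50$. Once this is established, the \emph{same} algorithm corrects every member of the family, and no identification of the particular $f$ is required. To make the obliviousness transparent I would exhibit the corrector as a two-phase procedure and verify that each phase depends only on the promise.

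The first phase locates the set $S\subseteq[n]$, with $|S|=k$, of relevant coordinates of $f_{\sigma}$. The only feature of $f$ that this phase exploits is that every relevant coordinate $i$ satisfies $\Inf_i(f_{\sigma})\ge 1/50$ while irrelevant coordinates have influence $0$; since $g$ is $\eps$-close to $f_{\sigma}$, the corresponding empirical influence estimates computed from $g$ separate the two cases with high probability. This is the query-intensive phase, and the content of Theorem~\ref{theo-junta-upper} is precisely that it can be carried out with $O(k\log k)$ queries; I would simply invoke that guarantee. Crucially, the procedure is described entirely through influence thresholds, so it is identical for all functions in the family.

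The second phase recovers $f_{\sigma}(x)$ for the given $x$. I would set $C_x=\{\,y\in\ZZ_2^n : y|_S = x|_S\,\}$, the subcube obtained by fixing the relevant coordinates to $x|_S$, on which $f_{\sigma}$ is constantly equal to $f_{\sigma}(x)$. Since $C_x$ has relative measure $2^{-k}$ and $g$ disagrees with $f_{\sigma}$ on at most an $\eps$ fraction of $\ZZ_2^n$, at most a $2^{k}\eps < 2^{-3}=1/8$ fraction of $C_x$ is corrupted, using $\eps<2^{-k-3}$. Hence a uniformly random $y\in C_x$ satisfies $g(y)=f_{\sigma}(x)$ with probability at least $7/8$, and a majority vote over $O(1)$ such samples returns $f_{\sigma}(x)$ with probability at least $2/3$. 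This phase reads the answer directly off $g$ and therefore uses no knowledge of $f$ whatsoever.

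The main, and essentially only, obstacle is to be sure that the learning guarantee of the first phase holds \emph{uniformly} over the whole family rather than for a single fixed $f$. This turns out not to be an additional difficulty: the $O(k\log k)$-query identification of $S$ in Theorem~\ref{theo-junta-upper} is analyzed purely through the $1/50$ influence gap, which holds for every function in the family by hypothesis, so the same query bound and success probability apply verbatim. Combining the two oblivious phases yields a single algorithm that $q$-locally corrects the entire family with $q=O(k\log k)$ for $\eps<2^{-k-3}$, which is the assertion of the corollary.
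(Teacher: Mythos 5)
Your top-level plan---observing that the corrector built in the proof of Theorem~\ref{theo-junta-upper} uses nothing about $f$ beyond the promise (being a $k$-junta with all influences at least $1/50$), hence corrects every member of the family---is exactly the paper's one-paragraph proof of the corollary. The problem is that your ``verification'' of this obliviousness substitutes a different algorithm, and that algorithm cannot exist within the stated query budget. Your first phase claims to locate the exact set $S$ of relevant coordinates, $|S|=k$, with $O(k\log k)$ queries, and attributes this capability to Theorem~\ref{theo-junta-upper}. That is not what the theorem or its proof provides, and it is information-theoretically impossible when $n \gg k$: each query returns one bit, so any algorithm (adaptive, randomized) that outputs the correct relevant set among $\binom{n}{k}$ possibilities with probability $2/3$ needs at least $\log_2\binom{n}{k} - O(1) = \Omega(k \log (n/k))$ queries; already for $k=1$, pinning down a dictator variable takes $\Omega(\log n)$ queries. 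Since local correctability here demands a query count independent of $n$, no phase of the corrector can identify the relevant coordinates exactly. What the proof of Theorem~\ref{theo-junta-upper} actually produces is much coarser: a union $S$ of $k$ parts of a uniformly random partition of $[n]$ into $3k$ parts, which with high probability contains all influencing variables but has expected size $n/3$.

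This matters for your second phase, which is internally consistent only if $|S|=k$. With the coarse $S$ the paper actually obtains, the subcube $\{\,y : y|_S = x|_S\,\}$ has measure about $2^{-n/3}$, so a bound of the form ``at most a $2^{|S|}\eps$ fraction of the subcube is corrupted'' is vacuous, and uniform sampling from that subcube is not sound. The paper's fix is precisely the step your reconstruction omits: keep $y_i = x_i$ on $S$ and flip each coordinate outside $S$ with probability $3/4$; since a non-influencing coordinate lies outside $S$ with probability exactly $2/3$, each such coordinate of $y$ is uniform and independent, so $y$ is uniform over the subcube determined by the $k$ influencing coordinates \emph{only}, giving corruption probability at most $2^{k+1}\eps < 1/4$ without ever knowing which coordinates those are. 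The correct proof of the corollary is thus the observation in your first paragraph---the partition-based algorithm is specified purely in terms of $k$ and the influence threshold $1/50$, never in terms of the identity of $f$ or its relevant variables---applied to the algorithm the paper actually constructs, not to the exact-identification procedure you describe.
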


\section{Local correction of $k$-juntas}\label{sec-junta-proofs}

We start this section with the proof of the lower bound for some juntas. The
juntas used in the proof are very sparse, having an exponentially small
fraction of inputs for which the value of the function is 1.

\begin{proof}[Proof of Theorem \ref{theo-junta-lower}]
Given $k < n \in \mathbb{N}$ where $n$ is even, define $f$ to be the AND
function of the first $k$ literals $x_1, \ldots, x_k$. In order to prove a
lower bound for 
the number of queries, we  use Yao's principle. To this end, we
define two distributions on functions which are all 
$o(1)$-close to being isomorphic to $f$,
one for which the algorithm should return zero and
another for which the algorithm should return one (denoted by $\DDD_0$ and
$\DDD_1$ respectively). 
We further show that any algorithm that performs only
$2^{o(k)}$ queries would not be able to distinguish between the two
distributions with non-negligible probability.

We first describe the distribution $\DDD_0$ as follows. We randomly choose a
permutation $\sigma \in S_n$ so that $\sigma(i) \in [n/2]$ for every $i \in
[k]$, meaning the $k$ relevant variables are all in the first half.\footnote{
Throughout this work we use the notation $[\ell] := \{1,2,\ldots, \ell\}$. }
The function $g$ given to the algorithm is defined by $g(y) = f_{\sigma}(y)$
whenever the Hamming weight of $y$ is at most $0.3n$ in each half (i.e.
$\sum_{i=1}^{n/2} y_i \leq 0.3n$ and $\sum_{i=n/2+1}^{n} y_i \leq 0.3n$) and
otherwise $g(y) = 0$. Notice that indeed $g$ is $o(1)$-close to being isomorphic to $f$
as we modified only an $o(1)$-fraction of the inputs.
The input $x$ is set to be the balanced input of $n/2$
zeros followed by $n/2$ ones. 
Clearly $f_\sigma(x) = 0$ for every instance in
$\DDD_0$ as required.

The distribution $\DDD_1$ is similar to $\DDD_0$ with one
modification. The permutation $\sigma$ is chosen so that $\sigma(i)
\not \in [n/2]$ for every $i \in [k]$. The choice of $x$ and the
locations where we fix $g(y) = 0$ are defined as before and indeed
$f_{\sigma}(x) = 1$ for every instance in $\DDD_1$.

We first show that an arbitrary query to $g$ in either distribution
would output one with probability at most $2^{-\Omega(k)}$. Let $y$
be some query the algorithm performs. Clearly, if the Hamming weight
of $y$ in either half is more than $0.3n$, the result would be zero
according to the definition of $g$ in both distributions. Otherwise,
the probability that $g(y) = 1$ is given by
$$
{m \choose k} / {n/2 \choose k} = \frac {(m-k+1)(m-k+2)\cdots m}
{(n/2-k+1)(n/2-k+2)\cdots (n/2)} \leq \left( \frac {3n/10} {n/2}
\right)^k = 0.6^k
$$
where $m$ is the 
Hamming weight of $y$ in the relevant half (either the first
half for $\DDD_0$ or the second half for $\DDD_1$), which is known to be at
most $0.3n$. Therefore, 
any algorithm that performs at most $2^{o(k)}$ queries
would find a $y$ 
for which $g(y) = 1$ only with negligible probability, and it
would not be able to distinguish between $\DDD_0$ 
and $\DDD_1$ with noticeable probability. Notice that the
proof implies that 
using an adaptive algorithm would not yield any improvement
as we can predict all results to be zero in advance (and therefore this is
equivalent to a non-adaptive algorithm).
\end{proof}

The fact that the AND junta is very sparse was crucial for the above
proof. In order to prove a better upper bound for most juntas, we
need some restriction that would ensure the function is far from being
sparse. In Theorem \ref{theo-junta-upper} we required something even
stronger, that the influence of every influencing variable, that
is, any
of the $k$ special variables of the junta, is at least $1/50$.

\begin{definition}[Influence]
Given a Boolean function $f: \ZZ_2^n \mapsto \ZZ_2$, the
\emph{influence} of $i$ with respect to $f$ is defined by
$$
\Inf_i(f) = \Pr_x \left[ f(x) \neq f(x + e_i) \right]
$$
where $e_i$ is the vector having 1 only at location $i$.
\end{definition}

Thus, the influence is the probability that changing the value of
the $i$th variable will also change the value of the function. This
probability is taken over all values of $x$, and is therefore the expected
influence of $i$ in a restricted function (when the variable $i$ itself
is not restricted).  Moreover, if the influence of some variable $i$
is greater than $1/50$, then the function is $1/100$-far from being
a constant.

Given a random $k$-junta for 
which the influencing variables are the first $k$
variables, the influence of some variable $1 \leq i \leq k$ is determined by
the bias of the $2^{k-1}$ pairs of inputs of length $k$ which differ 
only in
the $i$th variable, where the values of all other variables in
$[k]$ range over all possibilities. 
The expected bias is hence $1/2$, and moreover
$$
\Pr [\Inf_i(f) < 1/50 ] = \Pr[B(2^{k-1}, 1/2) < 2^{k-1}/50 ]< 2^{- c
 2^{k}}
$$
for some absolute 
constant $c > 0$, where here $B$ is the binomial distribution
and we applied one of the standard estimates for binomial distributions
(cf., e.g. \cite{AS}, Appendix A). Therefore, by the union bound, the $k$
influencing variables would all have 
influence greater than $1/50$ with probability $1-2^{-\Omega(2^k)}$.

Now that we defined 
the influence of a variable and verified that indeed almost
every junta satisfies the condition in the theorem, we 
describe the proof of Theorem \ref{theo-junta-upper}.

\begin{proof}[Proof of Theorem \ref{theo-junta-upper}]
Let $f$ be a $k$-junta as in 
Theorem \ref{theo-junta-upper}, and let $g$ be the
given input function which 
is $\eps$-close to $f_{\sigma}$ (we assume $\eps <
2^{-k-3}$ in order to guarantee that $g$ is close to a unique isomorphism
$f_{\sigma}$). Following the basic approach in the known junta testing
algorithms (see e.g. \cite{FKRSS,Blais-juntas}), 
we intend to randomly divide
the variables into parts and identify which sets have influencing variables.
Here however, 
mistakenly identifying a set to have influencing variables (due
to fault evaluations of the input function) or having more than one such
variable in a part is not an essential issue (as estimating the number of
influencing variables is not our goal).

Fix $s = 3k$ and partition the set $[n]$ into $s$ parts uniformly
at random, by assigning to each variable one of the $s$ sets
independently. For each set we perform $r = 100 \log k + 500$ pairs of
queries, where each pair $(x,x')$ is chosen independently and uniformly
at random such that $x$ and $x'$ agree on all elements outside of the
current set. When a set has at least one influencing variable (with
influence at least $1/50$), each such pair would yield different outcomes
with probability at least $1/100$ (as the randomly restricted function
over the variables outside of the current set is expected to be at least
$1/100$-far from being a constant). Therefore, the probability we would
dismiss such a set is at most $0.99^r < 0.5^{\log k} \cdot 0.99^{500}
< 1/100k$ (assuming we did not hit a faulty evaluation - a probability
that we later consider). Since there are at most $k$ sets with influencing
variables, by the union bound we would identify them all with probability
at least $99/100$.

In order to estimate how many sets we would consider as influencing, we
compute the probability that a non-influencing set would mistakenly be
considered otherwise. This can only occur if we query the function at a
faulty input, which happens with probability $\eps$. During this process,
we perform only $sr = O(k \log k)$ pairs of independent non-adaptive
queries and therefore we would hit a faulty evaluation only with
probability $O(k \log k / 2^{k})= 2^{-\Omega(k)}$.

So far with good probability we have identified at most $k$ sets which
are influencing. Let $S_1, \ldots, S_k$ denote these $k$ sets (where
we add some arbitrary randomly chosen sets if less than $k$ were found)
and define $S = \cup_{i=1}^{k} S_i$ to be their union (notice that the
size of $S$ is expected to be $\Exp{|S|} = n/3$). Given the input $x$ for
which we were asked to determine $f_{\sigma}(x)$, we would like to choose
an input $y$ which agrees with $x$ on all indices from $S$, and yet is
uniformly distributed (except for the restriction to match $x$ on the $k$
influencing variables). Achieving this would guarantee 
that the probability
of $y$ hitting a faulty evaluation is at most $2^{k+1}/2^{k+3}= 1/4$
(even if all faulty evaluations fall into inputs which agree with $x$
on these $k$ variables).

Let $p = 3/4$ and define $y$ so that $y_i = x_i$ for every $i \in S$,
and otherwise $\Pr[y_i \neq x_i] = p$. Whenever $i$ is not one of the
special $k$ variables, $\Pr[y_i \neq x_i] = \Pr[i \not \in S] \cdot
p = \tfrac{2}{3}\cdot\tfrac{3}{4} = \tfrac{1}{2}$ and these
probabilities are all
independent. The independence between the different variables and the
fact that $\Pr[i \not \in S]$ is exactly $2/3$ are crucial points which
require a moment of reflection to verify. This implies that indeed $y$
is uniformly distributed over all inputs which agree with $x$ on the
special $k$ variables, as required.

Combining the two parts together, the algorithm would return the correct
answer $g(y) = f_{\sigma}(x)$ with probability at least $3/4 - 1/100 -
2^{-\Omega(k)} > 2/3$ (for large enough $k$).
\end{proof}

\begin{proof}[Proof of Corollary \ref{coro-junta-upper}]
The algorithm provided here did not use any specific knowledge of the
function $f$ except for the guarantee of its structure, being a $k$-junta
in which each influencing variable has influence at least $1/50$.
Therefore, this family is $O(k \log k)$-locally correctable for $\eps < 2^{-k-3}$.
Hence, for every fixed $k$ this family 
forms a locally correctable code which has polynomial size in $n$
and constant number of queries $O(k \log k)$.
\end{proof}

\section{Conclusions and open problems}

In this work we have shown that $k$-juntas and degree $k$ polynomials are
$q$-locally correctable, where $q$ depends on the structure parameter
$k$ of the function, and not on the number of variables $n$. We have
also seen that $q$ is always at most $O(2^k)$ and that sometimes an
exponential behavior is tight. The main general open question in this
subject is that of computing the query complexity of local correction
for any given function. In particular, it would be very interesting to
find a characterization of all functions that are "easily" correctable,
that is, have constant query complexity (independent of $n$).

Although we have seen both upper and lower bounds for juntas
and polynomials, the lower bound is only applicable for specific
functions. Symmetric functions, for example, are always 0-locally
correctable as one does not need to query the function at all in order
to correct an input. However, such functions can have arbitrary large
degree as polynomials. Taking the majority function as an example,
it is trivial to correct and yet it has high degree, but even a slight
modification of it, $\Maj_{n-1}$ - the majority of $n-1$ of the variables,
makes it hard for correcting. Given $\Maj_{n-1}$, one can modify $o(1)$
fraction of the inputs, namely the balanced layer, and make this function
impossible for correcting in any number of queries.


\begin{thebibliography}{99}

\bibitem{AB}
N. Alon and E. Blais, {\em Testing boolean function isomorphism}. In Proc.
RANDOM-APPROX, pages 394-405, 2010.

\bibitem{AKKLR}
N. Alon, T. Kaufman, M. Krivelevich, S. Litsyn and D. Ron, {\em
Testing low-degree polynomials over GF(2)}. Proceedings of
RANDOM-APPROX 2003, 188-199. Also: Testing Reed-Muller codes, IEEE
Transactions on Information Theory 51 (2005), 4032-4039.

\bibitem{AS}
N. Alon and J. Spencer, {\em The Probabilistic Method, Third Edition}, Wiley
2008.

\bibitem{BGS}
M. Bellare, O. Goldreich, and M. Sudan, {\em Free bits, PCPs and
non-approximability - towards tight results}. SIAM J. Comput.,
27(3):804-915, 1998.

\bibitem{Blais-juntas}
E. Blais, {\em Testing juntas nearly optimally}. Proceedings of the
41st annual ACM STOC, 2009, pp. 151-158.

\bibitem{BO}
E. Blais and 
R. O'Donnell, {\em Lower bounds for testing function isomorphism}.
In IEEE Conference on Computational Complexity, pages 235-246, 2010.

\bibitem{BLR}
M. Blum, M. Luby, and R. Rubinfeld, {\em Self-testing/correcting with
applications 
to numerical problems}. Journal of Computer and System Sciences,
47(3):549-595, 1993.

\bibitem{CGM}
S. Chakraborty, D. Garc\'ia-Soriano, 
and A. Matsliah, {\em Nearly tight bounds
for testing function isomorphism}. In Proc. SODA, pages 1683-1702, 2011.

\bibitem{CG}
H. Chockler and D. Gutfreund, {\em A lower bound for testing
juntas}. Information Processing Letters, 90(6):301-305, 2004.

\bibitem{FKRSS}
E. Fischer, G. Kindler, D. Ron, S. Safra, and A. Samorodnitsky, {\em
Testing juntas}. J. Comput. Syst. Sci., 68(4):753-787, 2004.

\bibitem{PRS}
M. Parnas, D. Ron, and A. Samorodnitsky, {\em Testing basic boolean
formulae}. SIAM J. Discrete Math., 16(1):20-46, 2002.

\bibitem{Y}
S. Yekhanin, {\em Locally decodable codes}, NOW Publishers, 2010.

\end{thebibliography}
\end{document}